\newtheorem{definition}{Definition}
\newtheorem{lemma}{Lemma}
\newtheorem{problem}{Problem}
\newcommand{\dmin}{\ensuremath{d_{min}}}
\newcommand{\davg}{\ensuremath{d_{avg}}}
\newcommand{\dlb}{\ensuremath{d_{lb}}}
\newcommand{\demd}{\ensuremath{d_{emd}}}
\newcommand{\oi}{\ensuremath{O_i}}
\newcommand{\oj}{\ensuremath{O_j}}
\newcommand{\ti}{\ensuremath{t_i}}
\newcommand{\tj}{\ensuremath{t_j}}
\newcommand{\kth}{\ensuremath{k^\textrm{th}}}
\newcommand{\mindist}{MinDist}
\newcommand{\avgdist}{AvgDist}
\newcommand{\avgdistbuild}{\avgdist-Build}
\newcommand{\avgdistquery}{\avgdist-Query}
\newcommand{\nextestimate}{NextEstimate}
\newcommand{\complete}{Complete}
\newcommand{\emddist}{EMD}
\newcommand{\lone}{\ensuremath{L_1}}
\newcommand{\comment}[1]{}
\newcommand{\figwidth}{0.85\columnwidth}
\title{Finding Top-k Similar Pairs of Objects Annotated with Terms from an Ontology}
\author{
\hspace*{-5mm}
\begin{tabular}{ccc}
	Arnab Bhattacharya & Abhishek Bhowmick & Ambuj K. Singh \\
	\url{arnabb@iitk.ac.in} & \url{bhowmick@iitk.ac.in} & \url{ambuj@cs.ucsb.edu} \\
	{\small Dept. of Computer Science and Engineering,} & {\small Dept. of Computer Science and Engineering,} & {\small Dept. of Computer Science,}\\
	{\small Indian Institute of Technology, Kanpur} & {\small Indian Institute of Technology, Kanpur} & {\small University of California, Santa Barbara}\\
	{\small Kanpur, UP 208016, India.} & {\small Kanpur, UP 208016, India.} & {\small Santa Barbara, CA 93106, USA.}\\
\end{tabular}
}
\date{}
\begin{document}

\maketitle

\begin{abstract}
	With the growing focus on semantic searches and interpretations, an
	increasing number of standardized vocabularies and ontologies are being
	designed and used to describe data. We investigate the querying of objects
	described by a tree-structured ontology. Specifically, we consider the case
	of finding the top-$k$ best pairs of objects that have been annotated with
	terms from such an ontology when the object descriptions are available only
	at runtime.  We consider three distance measures. The first one defines the
	object distance as the minimum pairwise distance between the sets of terms
	describing them, and the second one defines the distance as the average
	pairwise term distance. The third and most useful distance measure---earth
	mover's distance---finds the best way of matching the terms and computes the
	distance corresponding to this best matching. We develop lower bounds that
	can be aggregated progressively and utilize them to speed up the search for
	top-$k$ object pairs when the earth mover's distance is used. For the
	minimum pairwise distance, we devise an algorithm that runs in $O(D + T k
	\log k)$ time, where $D$ is the total information size and $T$ is the total
	number of terms in the ontology. We also develop a novel best-first search
	strategy for the average pairwise distance that utilizes lower bounds
	generated in an ordered manner.  Experiments on real and synthetic datasets
	demonstrate the practicality and scalability of our algorithms.
\end{abstract}


\section{Introduction}
\label{sec:intro}

We are witnessing an unprecedented growth in annotated information. This growth
has been motivated by a need to share information and, more recently, by a need
to search and analyze objects based on their structure and semantics.  Annotated
objects occur in multiple application domains including language
(\url{http://wordnet.princeton.edu/}), biology
(\url{http://www.geneontology.org}), medical documents
(\url{http://www.nlm.nih.gov/mesh/}), web content
(\url{http://www.semanticweb.org/}), etc.  In all these cases, annotations are
derived from a structured vocabulary or ontology.  An ontology uses a number of
different relationships (e.g., is-a, is-part-of) to organize concepts or
hierarchies.

This paper investigates the analysis of large sets of objects that have been
annotated with terms from a common ontology.  The basic problem we consider is
as follows: Given two sets of objects annotated with terms from a common
ontology, how to find the top-$k$ pairs of objects among the two sets that are
most similar.

The above problem statement requires us to formalize the notion of distance
\emph{between two terms} in a given ontology and then to extend this notion to
distance \emph{between two annotated objects}.  The distance between two terms
can be measured by the shortest path distance on the ontology.

There are a number of definitions for distance (or conversely, similarity)
between objects.  Two obvious definitions are based on the minimum pairwise
distance and the average pairwise distance between the annotations.  The third
one is the earth mover's distance~\cite{ab05} that takes into account the
relative positions of the terms that describes the objects.  We investigate
querying based on these three distance measures.

In this paper, we consider that the object descriptions are submitted in an
online fashion, i.e., they are available \emph{only} at run-time.
As such, \emph{no} pre-processing or index construction or any other
offline processing can be used, and all the computation costs are paid at
run-time.  Even if the distance function used is a metric, the online nature of
the problem renders the use of index structures like the M-tree~\cite{ss21}
infeasible due to their high index construction times.  In a way, this problem
is reminiscent of the computation of spatial joins on objects embedded in the
Euclidean space: the spatial datasets are delivered online and we need to
compute the best spatial matches~\cite{tr08}.  Only that, in the case of
ontologies, the primitive distance is not Euclidean, but computed on a tree.  

The problem we consider here can be extended easily to the case when objects are
annotated with multiple independent ontologies.  We can compute the per-ontology
distance and combine them using an aggregate ranking technique such as the
threshold algorithm~\cite{ss36}.  The problem of finding objects similar to a
given query object (i.e., the $k$-NN problem) reduces to the special case of a
join of the database with a singleton set, the query object.  Similarly, range
queries can be solved by choosing only those pairs having a distance less than
the query range.  While these and other kinds of queries can also be considered
in our setting, the problem of top-$k$ joins exposes the computational and data
management complexities of this domain well, making it the right problem to
consider.

Formally, our problem can be stated as:
\begin{problem}
	\label{prob:problem}
	Given a set of objects each of which is defined by a set of terms from an
	ontology and a distance function $d(\oi, \oj)$ between two objects \oi{} and
	\oj{}, find $k$ pairs of objects $P$ such that for any $(\oi, \oj) \in P$
	and $(O_g, O_h) \notin P$, $d(\oi, \oj) \leq d(O_g, O_h)$.
\end{problem}

\begin{figure}[t]
	\begin{center}
		\includegraphics[width=0.75\columnwidth]{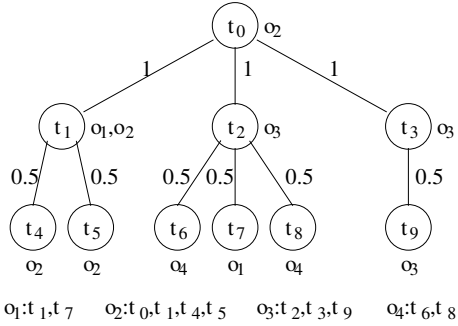}
		\vspace*{-2mm}
		\caption{Example of an ontology tree with objects.}
		\label{fig:example}
	\end{center}
	\vspace*{-6mm}
\end{figure}

Figure~\ref{fig:example} illustrates a particular instance of the problem.  The
ontology tree consists of $10$ terms.  There are $4$ objects that are described
by these terms.  The object descriptions are given by $O_1 = \{t_1, t_7\}$, $O_2
= \{t_0, t_1, t_4, t_5\}$, $O_3 = \{t_2, t_3, t_9\}$, $O_4 = \{t_6, t_8\}$.  An
inverted index, i.e., mapping a term to set of objects can be maintained on the
ontology itself (as shown in the figure).  Thus, each node in the tree
statically maintains a list $L$ of the objects that are described using the term
corresponding to the node.  For example, the list of objects for $t_0$ is
($O_2$).  We will use \emph{term} and \emph{node} interchangeably to denote the
node in which the term resides.

The edge weights on the tree decrease exponentially as the level increases.
Concepts closer to the root of the ontology are less similar than concepts that
share some common ancestors.  For example, broader concepts such as ``sports''
and ``politics'' should be more dissimilar than relatively narrower concepts
such as ``football'' and ``cricket''.  The exponentially decreasing edge weights
capture this notion.  We highlight the fact that the exponential edge weighting
function is an example, and not a necessity for the algorithms to work.  They
produce correct answers for all edge weights.

We denote the number of objects by $N$, the number of terms by $T$, the total
information size (i.e., the total number of describing terms for all the
objects) by $D$, and the number of object pairs queried by $k$.  In
Figure~\ref{fig:example}, $N = 4$, $T = 10$, and $D = 11$.  

Our contributions in this paper are as follows:
\begin{enumerate}
	\item First, we propose the problem of finding top-$k$ most similar object
		pairs that are annotated with terms in a \emph{hierarchy} in an online
		fashion.  The terms may define concepts in an ontology and objects may
		be described using the concepts.
	\item Then, we define and motivate three different distance functions
		(equivalently, similarity measures) that can be used to describe the
		similarity between a pair of objects.  The \emph{minimum pairwise
		distance} is useful for searching objects sharing a similar term
		(concept).  The \emph{average pairwise distance} can be used to query
		objects that are described using multiple attributes.  The \emph{earth
		mover's distance} (EMD) finds the best way of matching the terms from
		two objects and finds the distance corresponding to this best matching.
	\item Finally, we develop efficient algorithms to solve the problem using
		the above distances.  We use lower bounds based on \lone{} on reduced
		number of terms to speed up the computation of EMD.  The \lone{}
		distance, in turn, is computed progressively using a modified version of
		the threshold algorithm.  For the minimum pairwise distance, we show
		that the top-$k$ query runs in $O(T k \log k)$ time, where $T$ is the
		size of the ontology.  For the average pairwise distance, we devise an
		efficient best-first search algorithm that avoids distance computations
		by generating lower bounds in an ordered manner.  Experimental
		evaluations demonstrate the scalability and practicality of our
		algorithms.
\end{enumerate}

The rest of the paper is organized as follows.  Section~\ref{sec:work} describes
the related work.  Section~\ref{sec:distance} defines the term distance and the
different object distances.  Sections~\ref{sec:emd},~\ref{sec:min}
and~\ref{sec:avg} present the different algorithms for finding the top-$k$ pairs
of objects using those distances.  Experimental results are discussed in
Section~\ref{sec:expts}.  Section~\ref{sec:concl} concludes the paper.

\section{Related Work}
\label{sec:work}

Heterogeneous and high-throughput data is becoming commonplace in the sciences
and there is consensus that integration of this information is needed for new
breakthroughs.  In all these cases, annotations are derived from a structured
vocabulary or ontology.  The Semantic Web (\url{http://www.semanticweb.org/})
has defined a specific language, OWL (\url{http://www.w3.org/2004/OWL/}), for
describing ontologies.  In biology, genes are described using Gene Ontology (GO)
(\url{http://www.geneontology.org/}) that annotates genes and gene products by
three kinds of terms reflecting molecular functions, biological processes, and
cellular components.  Millions of abstracts in Pubmed
(\url{http://www.pubmed.gov/}) are indexed using MESH terms
(\url{http://www.nlm.nih.gov/mesh/}).  WordNet
(\url{http://wordnet.princeton.edu/}) is a lexical database that groups English
words into cognitive synonyms (or \emph{synsets}).  Hundreds of other ontologies
have been proposed over diverse application domains such as plant structures
(\url{http://www.plantontology.org/}), description and publication of digital
documents (\url{http://www.dublincore.org/}), and earth and the environment
(\url{http://sweet.jpl.nasa.gov/ontology/}).  A good compendium of different
ontologies is maintained at \url{http://www.ontologyonline.org/}.

A given ontology uses a number of different relationships to organize concepts
or hierarchical relationships.  Of these, ``is-a'' and ``is-part-of''
relationships are the most prevalent.  The former describes a subsumption
relationship while the latter represents how objects combine together to form
composite objects.  Both of these lead to hierarchical structures in which the
proximity between terms (concepts) grows as we descend down the hierarchy.

There have been numerous works on gene ontology ranging from gene function
prediction using information theory~\cite{tr13} to defining similarity among
genes using the full graph structure of GO~\cite{tr10}.  In~\cite{tr11}, a
comparison of three different gene similarity measures were presented.
Probabilistic approaches have also been used~\cite{tr12}.  Biologists have used
average and minimum pairwise distances between genes based on GO for comparing
co-evolutionary rates of yeast genes~\cite{tr15} and for co-clustering with gene
expression data~\cite{tr17} respectively.

There are a number of similar efforts in the area of information retrieval where
the similarity between documents is measured by considering the overlap of
terms.  The term-frequency inverse-document-frequency (tf-idf) measures consider
the frequency of terms in documents~\cite{th30}.  Work on text matching showed
that hierarchy-based measures using tf-idf outperform lexical similarity
measures~\cite{tr14}.  Latent Semantic Indexing (LSI)~\cite{th14} transforms
documents into an Euclidean space indexed by latent semantic dimensions.  EMD
has been shown to be better than other measures in finding document similarities
using the WordNet ontology~\cite{tr18}.

Embedding an ontology into an Euclidean space~\cite{tr09} and processing queries
in the embedding space is another alternative. However, an object description
will then span multiple points leading to possibly large MBRs.  Further, the
approach may suffer from high distortion of the embedding.

In this paper, we tackle the computational challenge of answering queries
efficiently using distances defined on hierarchical structures like ontologies.

\section{Distance Definitions}
\label{sec:distance}

\subsection{Distance between Terms}
\label{sec:termdist}

The distance $d(\ti, \tj)$ between two terms \ti{} and \tj{} is defined as the
length of the path between them on the ontology tree.  Since there is only one
path between two terms in a tree, from the properties of the shortest path, 
this distance is a metric~\cite{tr01}.  

An interesting and important point to note is that when the term distances
decrease \emph{exponentially} at each level, the distance between two terms at
the leaves of two subtrees can be approximated by the distance between the roots
of the subtrees.  For example, in Figure~\ref{fig:example} where the edge
distances are halved at each level, the distance between $t_4$ and $t_6$ (= $3$)
can be approximated by that between $t_1$ and $t_2$ (= $2$).  Using this term
distance, we next define different distance measures between the objects.  Once
more, we emphasize the fact that our algorithms are general enough to work
correctly with all edge weights, and not just the exponential function.

\begin{table}[t]
	\begin{center}
		\begin{tabular}{|c|cccc|}
			\hline
			Min & $O_1$ & $O_2$ & $O_3$ & $O_4$ \\
			\hline
			$O_1$ & 0.0 & 0.0 & 0.5 & 1.0 \\
			$O_2$ & & 0.0 & 1.0 & 1.5 \\
			$O_3$ & & & 0.0 & 0.5 \\
			$O_4$ & & & & 0.0 \\
			\hline
		\end{tabular}
		\caption{Minimum pairwise distances for the example in
		Figure~\ref{fig:example}.}
		\label{tab:minex}
	\end{center}
	\vspace*{-7mm}
\end{table}

\begin{table}[t]
	\begin{center}
		\begin{tabular}{|c|cccc|}
			\hline
			Avg & $O_1$ & $O_2$ & $O_3$ & $O_4$ \\
			\hline
			$O_1$ & 1.25 & 1.50 & 2.08 & 1.75 \\
			$O_2$ & & 0.75 & 2.17 & 2.50 \\
			$O_3$ & & & 1.11 & 2.00 \\
			$O_4$ & & & & 0.50 \\
			\hline
		\end{tabular}
		\caption{Average pairwise distances for the example in
		Figure~\ref{fig:example}.}
		\label{tab:avgex}
	\end{center}
	\vspace*{-7mm}
\end{table}

We next define the three distance measures---$\dmin$, $\davg$ and
$\demd$---between two objects.\footnote{We use the terms $\dmin$ and \mindist{},
$\davg$ and \avgdist{}, $\demd$ and \emddist{} interchangeably in the paper.}

\subsection{Minimum Pairwise Distance}
\label{sec:mindist}

\begin{definition}[Minimum Pairwise Distance]
	\label{def:min}
	The minimum pairwise distance between two objects \oi{} and \oj{}, denoted
	by \emph{\mindist{}}, is defined as:
	\begin{equation}
		\label{eq:min}
		d_{min}(\oi, \oj) = \min_{\ti \in \oi, \tj \in \oj} \left\{	d(\ti, \tj)	\right\}
	\end{equation}
\end{definition}

This distance is useful when searching objects that have similar terms.  For
example, even though a single biological document may contain references to
different terms like photoreceptor cells and ganglion cells, it is useful to be
able to retrieve it when another document that describes photoreceptor cells is
queried.

This distance is of particular use in \emph{keyword searching}, where the query
document consists of only the single keyword, and all documents having that
keyword will be returned with a distance of 0.  \mindist{}, in general, extends
this idea by finding additional documents that contain terms most similar to the
queried keyword.

The \mindist{} measure is heavily used in hierarchical bottom-up clustering
methods where in each step, two clusters with the minimum pairwise distance are
merged.  It has also been successfully used for finding the distance between two
genes, where a gene is annotated with a set of terms from GO~\cite{tr17}.

Table~\ref{tab:minex} shows the \mindist{} measures among the objects in
Figure~\ref{fig:example}.  \mindist{} is \emph{not} a metric distance as it does
not maintain the triangular inequality.  For example, $\dmin(O_1, O_4) +
\dmin(O_1, O_2) = 1.0 + 0.0 < 1.5 = \dmin(O_2, O_4)$.

\subsection{Average Pairwise Distance}
\label{sec:avgdist}

\begin{definition}[Average Pairwise Distance]
	\label{def:avg}
	The average pairwise distance between two objects \oi{} and \oj{}, denoted
	by \emph{\avgdist{}}, is defined as:
	\begin{equation}
		\label{eq:avg}
		d_{avg}(\oi, \oj) = \frac{1}{|\oi|.|\oj|} \sum_{\ti \in \oi, \tj \in
		\oj} d(\ti, \tj) 
	\end{equation}
	where $|\oi|$ and $|\oj|$ denote the number of terms describing
	\oi{} and \oj{} respectively.
\end{definition}

The \avgdist{} is useful in cases where the objects are not precisely defined.
For example, it has been successfully used for gene function prediction using GO
terms for yeast genes~\cite{tr15} as well as in the domain of web
services~\cite{tr16}.  The \mindist{} measure fails in such cases.

Table~\ref{tab:avgex} shows the \avgdist{} measures among the objects in
Figure~\ref{fig:example}.  \avgdist{} is not a metric, as it fails to satisfy
the identity property, i.e., $\davg(x,x)$ can be greater than $0$ (e.g.,
$\davg(O_1,O_1)=1.25$).  However, since it follows symmetry and triangular
inequality\footnote{See Appendix for the proof.}, it can be considered as a
\emph{pseudo-metric} distance.

\subsection{Earth Mover's Distance}
\label{sec:emddist}

\begin{table}[t]
	\begin{center}
		\begin{tabular}{|c|cccc|}
			\hline
			EMD & $O_1$ & $O_2$ & $O_3$ & $O_4$ \\
			\hline
			$O_1$ & - & 1.25 & 1.75 & 1.75 \\
			$O_2$ & & - & 2.17 & 2.50 \\
			$O_3$ & & & - & 2.00 \\
			$O_4$ & & & & - \\
			\hline
		\end{tabular}
		\caption{EMDs for the example in Figure~\ref{fig:example}.}
		\label{tab:emd}
	\end{center}
	\vspace*{-7mm}
\end{table}

Apart from the property of not being a true metric, \avgdist{} also suffers from
the fact that each term in one object is matched with every other term in the
other object.  For example, consider two documents with the terms \{war,
sports\} and \{war, football\}.  Even though it is obvious that the distance
between these two documents should be small, the average distance unnecessarily
compares ``war'' in the first document with ``football'' in the other.  The
earth mover's distance (EMD)~\cite{ab05} rectifies this shortcoming by comparing
only the like terms through finding the best matching between the terms of the
two documents.  For this example, \emddist{} will match ``war'' with ``war'' and
``sports'' with ``football'' and aggregate these distances only.  \emddist{} has
been shown to be better than other distances in finding similar documents using
the WordNet ontology~\cite{tr18}.

Formally, each object is considered to be composed of ``mass'' at the specific
spatial locations (corresponding to the terms that describe the object) in the
ontology.  The total mass of each object is $1$; consequently, the mass at each
term location is inverse of the number of terms describing the object.  For
example, $O_1$ in Figure~\ref{fig:example} will have mass $\frac{1}{2}$
corresponding to terms $t_1$ and $t_7$.

The \emddist{} between two objects $A$ and $B$ is the \emph{minimum} work
required to transform $A$ to $B$, where one unit of work is equal to moving one
unit of mass through one unit of distance in the ontology.  Finding the best
``flows'' (i.e., how much mass needs to be moved from one term in $A$ to another
term in $B$) is a linear programming (LP) problem.
\begin{definition}[Earth Mover's Distance]
	\label{def:emd}
	The earth mover's distance between two objects \oi{} and \oj{}, denoted
	by \emph{\emddist{}}, is defined as:
	\begin{align}
		\label{eq:emd}
		d_{emd}(\oi, \oj) &= \min_f \sum_{t_p \in \oi} \sum_{t_q \in \oj} c_{pq} f_{pq} \\
		\text{s.t., each }& f_{pq} \geq 0, \nonumber \\
		\forall_{t_p \in \oi}, \sum_{t_q \in \oj} f_{pq} = O_{i_p}, &\text{ and } \forall_{t_q \in \oj}, \sum_{t_p \in \oi} f_{pq} = O_{j_q} \nonumber
	\end{align}
	where $c_{pq}$ is the \emph{ground distance} between the terms $t_p$ and
	$t_q$ as per the ontology tree and $O_{i_p}$ is the mass of $t_p$ in \oi{}.
\end{definition}
EMD is a metric when the ground distance is a metric (proof in~\cite{ab05}).
Table~\ref{tab:emd} shows the EMDs among the objects in Figure~\ref{fig:example}.

\begin{figure}[t]
	\begin{center}
		\includegraphics[width=\columnwidth]{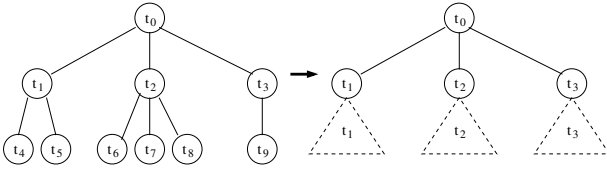}
		\vspace*{-2mm}
		\caption{Reduction of terms.}
		\label{fig:reduction}
	\end{center}
	\vspace*{-6mm}
\end{figure}

\subsection{Comparison of the Distance Measures}
\label{sec:quality}

To compare the usefulness of the three distance measures, we performed the
following experiment.  We used WordNet (\url{http://wordnet.princeton.edu/})
as the ontology and the ``bag-of-words'' dataset from the UCI repository
(\url{http://archive.ics.uci.edu/ml/datasets/Bag+of+Words}) as the set of
objects.  We chose the first $59$ documents from the categories \emph{enron} and
\emph{kos} of the bag-of-words dataset.  Each document was described using nouns
from the WordNet ontology, and the ontology was converted into a tree.  The
top-$50$ pairs were obtained using all the three distances.  For \emddist{}, on
an average, there were $45$ pairs where both the objects were from the same
category.  Also, all $10$ out of the top-$10$ pairs were of this nature.  The
corresponding numbers for \avgdist{} distance were $23$ and $6$ respectively.
The \mindist{} returned $505$ object pairs with distance $0$ as many objects
shared one or more terms.  Consequently, the top-$k$ lists returned were
arbitrary.  This convinced us of the quality of the \emddist{} and its
usefulness in finding the top-$k$ similar pairs of objects described by terms on
tree ontologies.  Nevertheless, the two other distance measures have been proved
to be useful in specific contexts~\cite{tr17,tr15}.

We next design algorithms to efficiently compute the top-$k$ pairs using these
distances.  We start with the \emddist{} as it is the most interesting and
useful measure.

\section{The Algorithm for \emddist{}}
\label{sec:emd}

\begin{table}[t]
	\begin{center}
		\begin{tabular}{|c|c|c|}
			\hline
			Object & Before & After\\
			& $\{t_0, t_1, t_2, t_3, t_4, t_5, t_6, t_7, t_8, t_9\}$ & $\{t_0, t_1, t_2, t_3\}$\\
			\hline
			$O_1$ & $\{0, {1 \over 2}, 0, 0, 0, 0, 0, {1 \over 2}, 0, 0\}$ & $\{0, {1 \over 2}, {1 \over 2}, 0\}$\\
			$O_2$ & $\{{1 \over 4}, {1 \over 4}, 0, 0, {1 \over 4}, {1 \over 4}, 0, 0, 0, 0\}$ & $\{{1 \over 4}, {3 \over 4}, 0, 0\}$\\
			$O_3$ & $\{0, 0, {1 \over 3}, {1 \over 3}, 0, 0, 0, 0, 0, {1 \over 3}\}$ & $\{0, 0, {1 \over 3}, {2 \over 3}\}$\\
			$O_4$ & $\{0, 0, 0, 0, 0, 0, {1 \over 2}, 0, {1 \over 2}, 0\}$ & $\{0, 0, 1, 0\}$\\
			\hline
		\end{tabular}
		\caption{Reduction of terms using Figure~\ref{fig:reduction} for example in
		Figure~\ref{fig:example}.}
		\label{tab:reduction}
	\end{center}
	\vspace*{-7mm}
\end{table}

When the two sets contain $N$ objects each, the problem of finding top-$k$ pairs
of objects can be solved by performing $O(N^2)$ \emddist{} computations.
However, the prohibitive time required by each \emddist{} computation makes the
entire running time ($O(N^2) \times O(EMD) + O(N^2 \log N)$)
impractical.\footnote{The sorting of $N^2$ pairs require an additional $O(N^2
\log N)$ time.}

\subsection{Lower Bound using Reduced Number of Terms}

When the ontology tree has a size of $T$, the ground distance matrix is of size
$T^2$.  However, we need not consider all the terms as we can prune the terms
that are absent in either of the object descriptions.\footnote{The row and
column sums for these terms in the flow matrix will be $0$ and hence all the
flows will be $0$ individually as well.}  Thus, the number of flow variables is
quadratic in the size of the object descriptions.  This is still impractical:
the average time taken to compute $\demd$ for objects of size $7$ was found to
be 54\,ms.\footnote{All the times reported in the paper are based on a 3\,GHz
machine with 2\,GB of RAM running Fedora Linux 9.}

Since the complexity of \emddist{} depends mainly on the number of flow
variables, which is quadratic in the number of terms by which each object is
described, the running time can be reduced if the size of the object
descriptions is reduced.  Figure~\ref{fig:reduction} shows how such reduction can
be accomplished.  The ontology tree is pruned at height $1$; only the root term
and its immediate children remain.  When a term thus deleted appears in an
object description, it is replaced by its ancestor that is retained.  Hence, all
the terms in the dashed subtrees in the figure are removed and replaced by the
root of the subtrees.  The size of an object description is now upper bounded by
the branching factor of the root.  Table~\ref{tab:reduction} shows the reduced
object descriptions.

The \emddist{} between two objects calculated using the reduced ontology is a
lower bound of the \emddist{} using the full ontology~\cite{ab05}\footnote{A
lower bound can be obtained by pruning the tree at any height.  However, there
is a trade-off between the tightness and computational efficiency of the lower
bound.}.  The number of terms in the reduced ontology is generally much less
(say $t \ll T$), thus reducing the number of flow variables to $t^2$.  Since the
complexity of linear programming is at least super-linear in the number of flow
variables, the running time of \emddist{} decreases by a large factor of
$T^2/t^2$.  The number of distance computations, however, still remains
$O(N^2)$.  Next, we show how to reduce the number of distance computations.

\subsection{L$_1$ Lower Bound}

The \lone{} distance, when scaled by the sum of the total mass, can be used as a
lower bound for \emddist{}~\cite{th29}.  Hence, the \lone{} distance between two
objects computed using all $T$ terms, when divided by $2$, serves as a lower bound
for \emddist{} between the objects.  From now on, whenever we mention \lone{},
we mean the scaled version of it which is a lower bound.  \lone{} on all terms,
in turn, is lower bounded by \lone{} on reduced number of terms.  The proof uses
the fact that $|a_i - b_i| + |a_j - b_j| \geq |(a_i + a_j) - (b_i + b_j)|$,
i.e., when the values are combined, the difference of the sums is more than the
sum of the differences.  Therefore, $L_{1_t}(\oi,\oj) \leq L_{1_T}(\oi,\oj) \leq
EMD_T(\oi,\oj)$, where the subscripts denote the number of terms used.  Since
\lone{} is much faster to compute (for $3$ terms, it takes only 0.002\,ms),
we can calculate a lower bound on \emddist{} for each object pair and then use
it as a filtering step to prune many of the pairs.

\begin{figure}
	\begin{center}
		\includegraphics[width=\columnwidth]{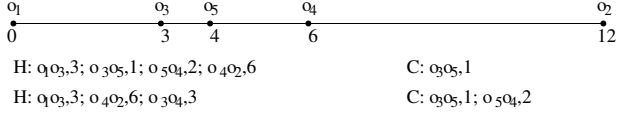}
		\vspace*{-2mm}
		\caption{Example of ordered generation of object pairs for one dimension.}
		\label{fig:sorted}
	\end{center}
	\vspace*{-6mm}
\end{figure}

The \lone{} distance between two objects is a sum of the distances between the
corresponding values in each dimension; therefore, if the distances for all the
object pairs are obtained and sorted for each dimension, the threshold algorithm
(TA)~\cite{ss36} can be applied to obtain the object pairs with the least sum of
distances or the least \lone{} distance in a progressive manner.  The order of
obtaining increasing \lone{} distances can then be used as a guide to order the
\emddist{} computations of the object pairs.

Obtaining a sorted list of object pairs for each dimension requires $O(N^2 \log
N)$ time.  TA, however, also works when the next object pair in the list can be
output in a sorted manner whenever needed.  This avoids the $O(N^2)$
computations.  Hence, now our problem is reduced to outputting the next smallest
pairwise distance whenever asked for in a particular list (or dimension).

For this, we maintain two data structures for each dimension: (i)~a min-heap $H$
that outputs the next best pair, and (ii)~a list $C$ that stores all the pairs
that have been outputted from $H$.

Initially, the $N$ objects are sorted and all $N-1$ consecutive object pairs
(not necessarily $O_iO_{i+1}$) corresponding to $N-1$ differences are inserted
into $H$.  Figure~\ref{fig:sorted} shows an example.  The 5 objects are sorted
according to their values for the dimension that is being processed.  Initially,
$H$ contains the 4 object pairs corresponding to the 4 differences in the sorted
list.  Whenever the next pair is asked for by TA, the minimum object pair from
$H$ is extracted and returned.  It is also inserted into $C$.  In this example,
after the first call, $O_3O_5$ is extracted from $H$ and inserted into $C$.
Similarly, in the next call, $O_5O_4$ is extracted.  

The initial pairs are not sufficient though.  There may be a non-initial pair
(e.g., $O_3O_4$) with a value ($3$) less than that of an initial pair ($O_4O_2$
with value $6$).  However, the important point to note is that any non-initial
pair is a \emph{combination} of some of the initial pairs.  Two pairs which have
an overlapping object can be fused together to generate a new pair.  For
example, $O_3O_4$ can be generated from $O_3O_5$ and $O_5O_4$ since $O_5$ is
overlapping.  Further, a pair can never be the least pair until and unless the
pairs from which it has been generated have been chosen (i.e., output from $H$).
Therefore, in the example, $O_3O_4$ is added to $H$ only after both $O_3O_5$ and
$O_5O_4$ have been chosen.  In general, when a pair $O_xO_y$ is chosen, the
contents of $C$ are scanned and new pairs are generated if possible.  If $C$ has
pairs of the form $O_wO_x$ and $O_yO_z$, new pairs $O_wO_y$ and $O_xO_z$ are
generated respectively and are inserted into $H$.  The value of this new pair is
the sum of the values of the pairs from which it is generated.

\subsection{Algorithm}

Figure~\ref{fig:algoemd} summarizes the entire \emddist{} algorithm that uses TA
with the lower bounding strategy.  First, the \lone{} lower bound using reduced
number of terms is extracted from the heap (line 14).  If it is less than the
current \kth{} estimate $P.dist$ in $P$ (line 15), the bound is improved by
computing the \lone{} using all the terms (line 16).  If it still less (line
17), the exact \emddist{} is computed (line 18) and the top-$k$ list is
modified, if necessary (line 21).  For each such \lone{}-reduced computation,
the threshold distance ($R$) is increased.  When $R >$ \kth{} distance in $P$,
no other object pair can have \lone{}-reduced distance less than the top-$k$
pairs already found.  Therefore, the \emddist{} distances will also be greater.
Hence, the algorithm is then halted.

\begin{figure}[t]
	\begin{center}
		\begin{framed}
			\begin{tabbing}
				Algorithm \textbf{EMD} \\
				\textbf{Input:} Reduced object list $O$ with $t$ terms \\
				\textbf{Output:} Object pair list $P$ \\
				1. \textbf{for} \= dimension $j = 1$ to $t$ \\
				2. \> Sort $O_{ij}$ (only $j^\text{th}$ dimension) \\
				3. \> Insert $N-1$ differences into heap $H[j]$ \\
				4. \> List $C[j]$ := $\Phi$ \\
				5. \> Thresholds $\tau[j]$ := $0$ \\
				6. \textbf{end for} \\
				7. $P$ := $\Phi$ ($\therefore$ $P.dist$ (i.e., $k^\text{th}$ distance in $P$) := $\infty$) \\
				8. Threshold $R$ := sum of all $\tau[j]$ (therefore, $R$ := $0$) \\
				9. $j$ := $0$ \\
				10. \textbf{while} \= $R < P.dist$ \\
				11. \> Extract minimum pair $p$ from $H[j]$ \\
				12. \> $\tau[j]$ := difference for pair $p$ \\
				13. \> \textbf{if} \= $p$ is not seen earlier \\
				14. \> \> $d_1$ := \lone{} on $t$ terms of $p$ \\
				15. \> \> \textbf{if} \= $d_1 \leq P.dist$ \\
				16. \> \> \> $d_2$ := \lone{} on all terms of $p$ \\
				17. \> \> \> \textbf{if} \= $d_2 \leq P.dist$ \\
				18. \> \> \> \> $d_3$ := EMD on all terms of $p$ \\
				19. \> \> \> \> \textbf{if} \= $d_3 \leq P.dist$ \\
				20. \> \> \> \> \> Insert $p$ into $P$ \\
				21. \> \> \> \> \> Update $P.dist$ as new $k^\text{th}$ distance \\
				22. \> \> \> \> \textbf{end if} \\
				23. \> \> \> \textbf{end if} \\
				24. \> \> \textbf{end if} \\
				25. \> \textbf{end if} \\
				26. \> Update $R$ using $\tau[j]$ \\
				27. \> Scan $C[j]$ with $p$ to generate new pairs $\Gamma$ \\
				28. \> Add $\Gamma$ to $H[j]$ \\
				29. \> $j$ := $(j + 1) \mod t$ \\
				30. \textbf{end while}
			\end{tabbing}
		\end{framed}
		\vspace*{-4mm}
		\caption{The \emddist{} algorithm.}
		\label{fig:algoemd}
	\end{center}
	\vspace*{-6mm}
\end{figure}

\subsection{Analysis of Time Complexity}

For each of the $t$ dimensions, we incur the following cost.  Initially, sorting
takes $O(N \log N)$ time.\footnote{An alternative approach using hashing that
may reduce this time is discussed in Appendix.}  Thereafter, inserting the $N-1$
elements in the heap takes $O(N)$ time.  With each call to the heap, an extract
operation takes $O(\log N)$.  At the $i^\text{th}$ iteration, at most $i$
elements are added to the heap again.  This takes $O(i \log N)$ time.  Thus, if
we have $k'$ calls (this $k'$ is generally larger than $k$ as many object pairs
with low lower bound but high \emddist{} are examined), the time per dimension
is $O(k'^2 \log N)$ which leads to a total time of $\sum_{j=1}^t O(k'^2_j \log
N)$ or $O(t k'^2_{max} \log N)$ where $k_{max}$ is the maximum of all $k'_j$s.
\\

\noindent
\textbf{Space Complexity:}
Heap $j$ requires $O(N+k'^2_j)$ space where $k'_j$ is the number of calls made
on column $j$.  Hence, the total space required is $O(t(N+k'^2_{max}))$ where
$k'_{max}$ is the maximum of $k'_j$'s.

\section{The Algorithm for \mindist{}}
\label{sec:min}

Unlike the \demd{} distance, whenever two terms corresponding to two objects are
encountered, the \mindist{} for the object pair can be estimated.  If it is
better than the current estimate, it is retained; otherwise, it is \emph{never}
needed again.  We next explain the \mindist{} algorithm that exploits this property.

\begin{figure}[t]
	\begin{center}
		\begin{framed}
			\begin{tabbing}
				Algorithm \textbf{MinDist} \\
				\textbf{Input:} Node $t$ \\
				\textbf{Output:} \= Pair list $P$ of size $k$; \\
				   \> Object list $B$ of size $O(\sqrt{k})$ \\
				1. $TB$ := list of objects in $t$ of size $O(\sqrt{k})$ \\
				2. $c$ := number of children of $t$ \\
				3. \textbf{for} \= $i$ = $1$ to $c$ \\
				4. \> $CB[i], CP[i]$ := MinDist($t.child[i]$) \\
				5. \> Add $t.edge[i]$ cost to each object in $CB[i]$ \\
				6. \textbf{end for} \\
				7. $B$ := Merge($CB[1]$, $\dots$, $CB[c]$, $TB$) \\
				8. $TP$ := GenPairs($B$) \\
				9. $P$ := Merge($CP[1]$, $\dots$, $CP[c]$, $TP$)
			\end{tabbing}
		\end{framed}
		\vspace*{-4mm}
		\caption{The \mindist{} algorithm.}
		\label{fig:algomin}
	\end{center}
	\vspace*{-6mm}
\end{figure}

Any object pair $(\oi, \oj)$ having a lesser distance than $(O_g, O_h)$ must
have a term pair $(\ti \in \oi, \tj \in \oj)$ which has a lesser distance than
any term pair $(t_g \in O_g, t_h \in O_h)$.  Hence, we only need to identify
such term pairs $(\ti, \tj)$ that are close and process their inverted lists,
i.e., the list of objects.

A pre-processing step is required to build the inverted lists of objects at each
node.  The inverted index is needed to be built for the minimum and average
pairwise distances but not the earth mover's distance.  For each object \oi,
when a term \tj{} appears in it, \oi is inserted into the inverted list of \tj.
The list is accessed using hashing, and the object is
inserted at the top of list.

Figure~\ref{fig:algomin} describes the entire algorithm.  For a node, the
\mindist{} algorithm computes the top-$k$ object pairs that are described by at
least one term pair in its subtree.  Any such object pair must either (i)~be in
the top-$k$ list of the children, or (ii)~contains terms from different subtrees
of the children.  The recursive definition of the first kind allows us to employ
a divide-and-conquer approach.  For the second kind, we need a list of objects
that are close to the subtree of the children nodes.  The lists can then be
joined to generate the necessary object pairs.  Thus, the \mindist{} algorithm
computed at the root of the ontology returns the top-$k$ pairs.

As shown in Figure~\ref{fig:algomin}, each node $t$ maintains two lists: (i)~a list
of pairs of objects $P$ ordered by their \dmin{} distances; and (ii)~a list of
objects $B$ ordered by their minimum distances $d_i$ to the node $t$.  The
length of $P$ is at most $k$.  The length of $B$ should be enough to ensure that
$k$ distinct pairs of objects can be generated from $B$.  The number of terms
required to do that is $k' = O(\lceil \sqrt{k} \rceil)$.\footnote{Since
$k'(k'-1)/2 \geq k$, the actual number of terms required is $k' = \lceil 1/2 +
\sqrt{1/4 + 2k} \rceil$.}

When \mindist{} is called on a node $t$, it selects $k'$ objects from
its list $L$ into $TB$.  $L$ is the list of objects associated with that term.
\mindist{} is then called on each of its $c$ children.  The cost of the edge
from $t$ to its child is added to the objects in the corresponding child's
object list (line 5).  This is done to ensure that the distances are maintained
correctly.  The $c$ sorted object lists and the list of objects in $t$ are then
merged to produce the sorted list $B$.

The merging (line 7) is done using a heap data structure~\cite{tr01}.  The heap
is initialized with $c + 1$ elements at position $1$ of each of the child lists
and the list $TB$.  The minimum element is then extracted into $B$.  Since all
the individual lists are sorted, the properties of heap guarantee that the
object extracted has the least \dmin{} distance from this node.  The object at
the next position of the list from where this minimum object came is then
inserted into the heap.  This is repeated $k'$ times.  

All the possible $k$ pairs are then generated from the $k'$ objects in $B$
(method GenPairs in line 8).  This list $TP$ computes the $k$ best distances of
the object pairs which are not in any of the subtrees.

$TP$ is finally merged with the pair lists $CP[i], i = 1 \dots c$ from the
children to produce the final pair list $P$ using a heap in the same manner as
above (line 9).

\subsection{Analysis of Time Complexity}
\label{sec:minanal}

In this section, we analyze the time and space complexities of the
\mindist{} algorithm.

We first analyze the time required to compute the inverted index.  The object
descriptions are read once, and for each term in an object, the corresponding
list is accessed in $O(1)$ time using hashing, and the object is inserted at the
top of list in another $O(1)$ time.  The total time required for this phase is,
therefore, $O(D)$.

We next analyze the running time for the main phase of the algorithm.  Selecting
$k'$ objects in $TB$ requires $O(k')$ time.  Adding the child edge costs to each
object in $CB$ lists takes $O(k' c)$ time.

At every step of the merging operation, the object with the minimum distance is
extracted from the heap and another object is inserted.  The size of the heap
is, therefore, never more than $O(c)$.  Extracting the minimum element and
inserting another object into the heap takes $O(\log c)$ time.  Since the
operation is repeated $k'$ times, the total running time of the merging
procedure is $O(k' \log c)$.

If, however, the objects in the child lists are not unique, $k'$ operations may
not be enough to select $k'$ different objects.  Thus, a hashtable is used to
ensure that an object is inserted into the heap only once.  First, all the lists
are scanned in $O(k' c)$ time.  If an object appears for the first time, it is
inserted into the hashtable with the object identifier as the key and the
distance as the value.  If an object appears twice, the one with the minimum
distance is maintained in the hashtable.  Before any object is inserted into the
heap, the hashtable is checked.  If this object is different from the one
maintained in the hashtable, then there exists another copy of this object with
a smaller distance.  Hence, this object does not need to be considered.  This
limits the number of heap operations to $O(k')$.  Assuming that the hashtable
operations take constant time, the running time then is $O(k' \log c)$.

Sorting $k$ local object pairs requires $O(k \log k)$ time.

Finally, the sorted pair lists at the node and the children are merged in $O(k
c + k \log c)$ time using a heap and a hashtable in a similar manner as before.

Thus, the total running time of the \mindist{} algorithm at a node with $c$
children is $O(k c + k \log k)$.

The algorithm is run once at each node of the ontology.  Assuming that there are
$T$ terms in the ontology, the total number of children for \emph{all} the nodes
is $O(T)$.  Hence, the amortized cost is $O(T k + T k \log k) = O(T k \log k)$.

The total running time of the \mindist{} algorithm is, therefore, $O(D + T k
\log k)$.
\\

\noindent
\textbf{Space Complexity:}
Each node in the ontology contains an object list of size $O(k')$ and a pair
list of size $O(k)$.  Once these lists are sent to the parent, they are no
longer required.  Thus, at any time, the space requirement at a node is $O(c (k'
+ k))$.  The total space complexity, therefore, is $O(c_{max} (k + k'))$ where
$c_{max}$ is the largest branching factor of a node in the tree.  The inverted
index requires $O(D)$ space for storage.

\section{The Algorithm for \avgdist{}}
\label{sec:avg}

Unlike the \mindist{} algorithm that needs to maintain only one term pair for
each object pair, the \davg{} distance needs to remember all the possible
term-pair distances.  Consequently, it runs in two phases: (i)~the \emph{Build}
phase, when pertinent information about objects are collected at the root in a
bottom-up manner, and (ii)~the \emph{Query} phase, when such information is used
to identify the top-$k$ pairs in a top-down order.

For any pair of objects, there are two types of costs that need to be
accumulated.  The first is the \emph{across-tree} costs, i.e., the distances
between the describing terms that occur in different subtrees of the root, and
the second is the \emph{within-tree} costs, i.e., the distances between the
describing terms that are within the same child of the root.  For example, in
Figure~\ref{fig:example}, the total pairwise term distances for ($O_1, O_4$) can be
broken into 2 parts: (i)~the across-tree distances between $t_1$ of $O_1$ and
$t_6$, $t_8$ of $O_4$ in the different subtrees under $t_1$ and $t_2$
respectively, and (ii)~the within-tree distances between $t_7$ of $O_1$ and
$t_6$, $t_8$ of $O_4$ in the same subtree under $t_2$.

To estimate the across-tree distances for object pairs at a node, the following
information need to be calculated for each object: (i)~the \emph{number of
terms} in the subtree that describe the object, and (ii)~the \emph{total
distance} of all such terms to this node.  This information is accumulated at
the root of the ontology by the build phase \avgdistbuild{}, which we describe
in Section~\ref{sec:avgbuild}.  After this phase, the root has collected the
following tuple for each object: ($\oi$, $n_i$, $w_i$).  

Before describing the two phases of the algorithm, we explain how lower bounds
for the across-tree costs of an object pair can be computed using the above
information and how such lower bounds can be generated in an \emph{ordered}
manner.

\subsection{Lower Bounds for Across-Tree Costs}
\label{sec:lb}

The estimates of the across-tree distances of a pair of objects \oi{} and \oj{}
at a node $t$ depend on the occurrences of their describing terms.  The
\emph{span} of an object is defined to be the number of subtrees of the root
where its constituent terms occur.  It can be either single, i.e., its terms
occur in only one subtree, or multiple, i.e., its terms occur in multiple
subtrees.  Based on these, 3 different cases need to be considered.  In each
case, we would like to write the bounds at a node in terms of the parameters
maintained for \oi{} and \oj{} at the node, i.e., in terms of ($\oi$, $n_i$,
$w_i$) and ($\oj$, $n_j$, $w_j$).\\

\noindent
\textbf{Case 1:} Both the objects have single spans.  Two sub-cases need to be
considered.\\

\noindent
\textbf{Sub-case 1(a):} The objects are in the same subtree.  The
across-tree cost is $0$ and nothing can be concluded about their distance in the
subtree without descending deeper into the subtree.  Hence, the lower bound is
\begin{equation}
	\label{eq:lb1a}
	\dlb = 0
\end{equation}

\noindent
\textbf{Sub-case 1(b):} The objects are in different subtrees.  In this case,
the across-tree distance can be estimated exactly.  The distance between a term
$t_i \in O_i$ and $t_j \in O_j$ is $d(\ti, t) + d(\tj, t)$ where $t$ is the node
at which this lower bound is being computed.  The \emph{total} across-tree
distance is obtained by adding all such combinations of terms:
\begin{align}
	\label{eq:sum1b}
	& \sum_{j=1}^{|\oj|} \sum_{i=1}^{|\oi|} d(\ti, t) + \sum_{i=1}^{|\oi|}
	\sum_{j=1}^{|\oj|} d(\tj, t) \nonumber \\
	= & n_j . \sum_{i=1}^{|\oi|} d(\ti, t) + n_i .
	\sum_{i=1}^{|\oi|} d(\tj, t) \nonumber \\
	= & n_j . w_i + n_i . w_j 
\end{align}
Thus, the average distance is
\begin{equation}
	\label{eq:lb1b}
	\dlb = d = \frac{w_i}{n_i} + \frac{w_j}{n_j}
\end{equation}
Since the within-tree distance for this pair is $0$, this is
the exact distance.\\

\noindent
\textbf{Case 2:} Both the objects have multiple spans.  The minimum across-tree
distance can be estimated in a manner similar to that in Case 1(b).  There are
at least two pairings of terms of $O_i$ and $O_j$ that are in different
subtrees.  Using Eq.~\eqref{eq:sum1b}, the \emph{total} across-tree costs for
these pairings are $w_{i_1} . n_{j_2} + w_{j_2} . n_{i_1}$ and $w_{i_2} .
n_{j_1} + w_{j_1} . n_{i_2}$, where $n_{i_1}$, $w_{i_1}$ etc. are the number of
terms of \oi{} in one subtree and its total distance to $t$ from that subtree.
The values of $n_{i_1}$, $n_{i_2}$, $n_{j_1}$, and $n_{j_2}$ are at least $1$.
Thus, the total across-tree distance is at least $w_{i_1} + w_{i_2} + w_{j_1} +
w_{j_2} = w_i + w_j$.  The lower bound for the average pairwise distance, then,
is 
\begin{equation}
	\label{eq:lb2}
	\dlb = \frac{w_i + w_j}{n_i . n_j}
\end{equation}

\noindent
\textbf{Case 3:} One object \oi{} has a single span, and the other object \oj{}
has a multiple span.  Similar to Case 2, there is at least one subtree
containing terms of $O_j$ but not containing terms of $O_i$.  The \emph{total}
across-tree cost is then the minimum of $w_i . n_{j_1} + w_{j_1} .  n_i$ and
$w_i . n_{j_2} + w_{j_2} . n_i$.  Similar to Case 2, there is at least one term
of \oj{} that is not in the same subtree of \oi{}.  Thus, $n_{j_1}$ and
$n_{j_2}$ are at least $1$.  However, without knowing where the terms of \oj{}
occur, nothing can be concluded about $w_{j_1}$ and $w_{j_2}$.  Since the terms
may occur at the node itself, the estimates for $w_{j_1}$ and $w_{j_2}$ are $0$.
Hence, the total distance is at least $w_i$ producing a lower bound of
\begin{equation}
	\label{eq:lb3}
	\dlb = \frac{w_i}{n_i . n_j}
\end{equation}

\subsection{Generating Ordered Pairs}
\label{sec:avgorder}

Though the above mentioned lower bounds can be computed for a given pair, the
cost of computing them for every pair is $O(N^2)$.  We would like to avoid such
costly online operations.  The trick is to separate the parameters of \oi{} and
\oj{} in each lower bound such that they can be systematically generated in an
\emph{ordered} manner whenever needed.  The order of generation will guarantee
that at any point of time, the lower bounds of the pairs not examined will be
greater than or equal to the lower bounds of the pairs already generated.  In
this section, we will discuss ways to achieve this for each of the cases
mentioned above.

To identify pairs of objects in the same subtree (Case 1(a)), $c+1$ different
lists are maintained at the root corresponding to itself and its $c$ children.

To handle Case 1(b), each of these $c + 1$ lists of objects are sorted by the
average distance $w_i / n_i$.  Given two such sorted child lists, it is
guaranteed that the lower bound (which is the sum of the distances) for an
object pair at positions $p_i$ in the first list and $p_j$ in the second list is
lower than the estimate of every pair whose positions are $> p_i$ and $> p_j$.
Thus, every time a pair at positions $(p_i, p_j)$ is inspected, only its
immediate successors $(p_i+1, p_j)$ and $(p_i, p_j+1)$ need to be considered.
Since there are $c + 1$ child lists, the number of possible ways of pairing is
$c(c + 1)/2$.

The lower bound for Case 2 is not easily separable in terms of parameters of
\oi{} and \oj{}.  It is, however, separable if for an object pair, the number of
terms for the objects (i.e., $n_i$, $n_j$) are known a priori.  To do that, the
list of objects with multiple spans is partitioned such that each partition
contains objects with a particular $n_i$.  Pairing \oi{} and \oj{} and knowing
which partitions they come from immediately defines the denominator of the lower
bound.  Thus, if there are $r$ partitions, sorting each partition by $w_i$ and
performing $r^2$ pairings in the same way as done for Case 1(b) orders the pairs
according to their lower bounds.

Case 3 is handled similarly.  The single-span list is broken into $c + 1$ lists
and the multiple-span list into $r$ partitions.  Generating all $r(c + 1)$
pairings gives the lower bounds in an ordered manner.

We next describe how the Query phase of the \avgdist{} algorithm uses these
lower bounds.

\subsection{Query Phase}
\label{sec:avgquery}

\begin{figure}[t]
	\begin{center}
		\begin{framed}
			\begin{tabbing}
				Algorithm \textbf{AvgDist-Query} \\
				\textbf{Input:} Node $root$ \\
				\textbf{Output:} Pair list $P$ of size $k$ \\
				1. $L$ := list of object mappings in $root$ \\
				2. $P$ := $\Phi$ (therefore, $P.dist$ := $\infty$) \\
				3. $c$ := number of children of $root$ \\
				4. $r$ := number of partitions of objects \\
				5. Divide $L$ into $c + 1 + r$ lists $B$ \\
				6. $A$ := GenInitialPairs($B$) \\
				7. Insert each $a \in A$ into heap $H$ \\
				8. $p$ := Pop($H$) \\
				9. \textbf{while} \= $p.dist < P.dist$ \\
				10. \> \textbf{if} \= Done($p$) = false \\
				11. \> \> $p$ := UpdateEstimate($p$) \\
				12. \> \textbf{end if} \\
				13. \> \textbf{if} \= Done($p$) = true \\
				14. \> \> \textbf{if} \= $p.dist < P.dist$ \\
				15. \> \> \> Insert $p$ into $P$ \\
				16. \> \> \> $P.dist$ := \kth{} distance in $P$ \\
				17. \> \> \textbf{end if} \\
				18. \> \textbf{else} \\
				19. \> \> Insert $p$ into $H$ \\
				20. \> \textbf{end if} \\
				21. \> $A$ := GenNextPairs($p$, $L$) \\
				22. \> Insert each $a \in A$ into $H$ \\
				23. \> $p$ := Pop($H$) \\
				24. \textbf{end while}
			\end{tabbing}
		\end{framed}
		\vspace*{-4mm}
		\caption{The \emph{Query} phase of the \avgdist{} algorithm.}
		\label{fig:algoavgquery}
	\end{center}
	\vspace*{-6mm}
\end{figure}

The \avgdistquery{} procedure (Figure~\ref{fig:algoavgquery}) is run at the root of
the ontology.  It outputs a list $P$ of top-$k$ object pairs.  When the size of
$P$ is less than $k$, $P.dist$ is $\infty$; otherwise, it is maintained as the
\kth{} largest distance in $P$.

The list $L$ of objects is broken into $c + 1 + r$ lists corresponding to single
and multiple spans as explained in the earlier section.  From these lists, the
initial pairs with the lower bounds are generated (method GenInitialPairs in
line 6) and put into a heap $H$.  See Section~\ref{sec:avgorder} for details on
how to generate these pairs.

The top-down searching for object pairs proceeds in a manner where at every
stage, only the current ``best'' pair is examined~\cite{re38}.  Thus, this
search strategy is called the \emph{best-first search}.

The algorithm progresses by extracting the current \emph{best pair} from the
heap, i.e., the pair $p$ with the current best lower bound (line 8).  If the
lower bound is an estimate for $p$ and not an exact distance as in Case 1(b),
the bound can be improved in two ways (line 11).  First, the within-tree costs
at the subtrees in the next level can be estimated again using
Eqs.~(\ref{eq:lb1a}-\ref{eq:lb3}) by descending into the subtree (denoted as
\avgdist-\nextestimate).  The descent is made in a breadth-first order on the
tree.\footnote{Any order, e.g., depth-first order, will also work.  However, if
the edge distances decrease exponentially, breadth-first ordering produces
better bounds.}  The second way is to compute the term-wise distances fully
without resorting to recursion (denoted as \avgdist-\complete).  This, however,
disregards the structure of the ontology.

If the exact distance of $p$ is computed, the list $P$ is examined.  If the
\kth{} distance in $P$ is more than that of $p$ (line 14), $p$ is inserted into
$P$ and $P.dist$ is modified.  The size of $P$ is maintained to be at most $k$
by removing the pair with the largest distance.  

If, however, the lower bound of $p$ is still an estimate, $p$ is re-inserted
back into the heap $H$ (line 19).  The next pairs are generated from the $c + 1
+ r$ lists (method GenNextPairs in line 21 as described in
Section~\ref{sec:avgorder}) and inserted into the heap (line 22).

In the next iteration, the pair which is now the \emph{best} is examined (line
23).  If this pair has a distance more than the \kth{} distance in $P$ (i.e.,
$P.dist$), it is guaranteed that all the pairs currently in the heap and all the
pairs that are not generated will have a greater distance.  This is due to the
properties of the heap and the ordered nature of generating the pairs from the
$c + 1 + r$ lists.  Thus, the algorithm is then terminated correctly.

\subsection{Build Phase}
\label{sec:avgbuild}

In this section, we describe how \avgdistbuild{} computes the information
($\oi$, $n_i$, $w_i$) for an object.\footnote{Figure~\ref{fig:algoavgbuild} in
Appendix outlines the algorithm.}  Each node $t$ maintains an inverted list $L$
of objects \oi{} described using $t$.  First, it converts $L$ into $B$
by making $n_i = 1$ and $w_i = 0$ for each $\oi \in L$.  Then, it calls
\avgdistbuild{} for each of its children.  For each list $CB$ that it receives
from a child, and for each object $\oj \in CB$, it modifies $w_j$ by adding to
it the distance to the child node multiplied by the number of times \oj{} occurs
in the child subtree, i.e., $w_j = w_j + dist \times n_j$, where $dist$ is the
edge distance from $t$ to its child.  This ensures that the total distance from
$t$ is maintained correctly, since each of the $n_j$ objects have to traverse
the distance $dist$.
\\

\noindent
\textbf{Analysis of Space and Time Complexities:}
Assume the total size of the object description to be $D$ which is at most $N
\times T$ where $N$ is the total number of objects, and $T$ the total number of
terms.  The inverted index requires $O(D)$ time and space to construct.  We next
analyze the space and time complexity of \avgdistbuild{} in terms of these
parameters.

Each object's information is stored at the terms describing it.  The information
stored in a term is repeated along all its ancestors.  Since the size of the
description is $D$, and there are $O(\log T)$ ancestors (assuming the ontology
to be balanced), the storage cost is $O(D \log T)$.

The running time can be analyzed similarly.  At the leaf level of the tree,
there are $D$ describing terms.  When this $O(D)$ information is sent up to the
next level, the time required to combine the information is still $O(D)$ since
each object description is read only once and is matched using a hashtable to
the information already computed.  Assuming the height of the tree to be $O(\log
T)$, the total running time is $O(D \log T)$.

\section{Experiments}
\label{sec:expts}

\subsection{Datasets}
\label{sec:dataset}

\begin{table}
	\begin{center}
		\begin{tabular}{|c||c|c|}
			\hline
			Name & Number of & Number of \\
			& GO Terms ($T$) & Genes ($N$) \\
			\hline
			Process & 13762 & 3437 \\
			Function & 7803 & 1958 \\
			Localization & 1990 & 645 \\
			\hline
		\end{tabular}
		\caption{The Gene Ontology (GO) datasets.}
		\label{tab:go}
	\end{center}
	\vspace*{-7mm}
\end{table}

We have experimented with real as well as synthetic datasets.  The real dataset
is that of Gene Ontology (GO, \url{http://www.geneontology.org/}).  There are
three ontologies in GO, corresponding to biological process, molecular function
and cellular component (localization) of terms.  The details of the three
ontologies are given in Table~\ref{tab:go}.  The datasets were curated by
hashing gene descriptions using their bit-vector representations of the terms
and removing the identical genes.

The synthetic datasets were generated by controlling the number of objects, the
number of terms, the average branching factor of the ontology tree and the
average number of terms per object.  The ontologies and the object datasets are
created separately.  Ontologies have a fixed size and an average branching
factor.  Starting from the root, we generate a random number of children by
perturbing the average branching factor within some limits.  We continue with
this at all successive nodes.  The object dataset is generated with a fixed
number of objects and an average number of terms per object.  Again, a random
number is generated from the average by perturbing it.  Then, terms are picked
from the ontology randomly without replacement for the required number of terms.
This process is repeated for all objects.

\subsection{Experimental Setup}
\label{sec:setup}

When the distance function between the objects is defined as the \emph{earth
mover's distance} the following schemes were evaluated:
\begin{itemize}
	\item \lone{}-reduced: In this scheme (Section~\ref{sec:emd}), the \lone{}
		on reduced number of terms is used.
	\item \lone{}-full: In this scheme, the \lone{} on all terms is used.  The
		tree is not pruned at a height $1$.
	\item EMD-reduced: All the $O(N^2)$ EMDs on reduced number of terms are
		computed.  These are then used to prune those object pairs for which the
		reduced EMD is greater than the $k^\text{th}$ best EMD already found.
	\item Brute-force: In this scheme, all the $O(N^2)$ pairs are computed and
		then the top-$k$ pairs are returned.
\end{itemize}
The performance of the brute-force scheme (267\,s for $N=100$ objects) is too
impractical to be of any use and are, therefore, not reported.  Also, the times
of \lone{}-full are not reported since, in the best case, it can only save
\lone{}-reduced computations, which are very fast anyway.  In all the
experiments, it was actually worse than \lone{}-reduced.

When the distance function between the objects is defined as the \emph{minimum
pairwise distance} between the terms, the following schemes were considered:
\begin{itemize}
	\item \mindist: This is the scheme described in Section~\ref{sec:min} that
		has a running time of $O(T k \log k)$.
	\item Brute-force: In this scheme, all the $O(N^2)$ pairs are computed and
		then the top-$k$ pairs are returned.  Maintaining a heap of size at most
		$k$ gives the running time of this scheme to be $O(N^2 \log k)$.  Due to
		the exorbitant online costs of it, this scheme is not practically
		useful.
\end{itemize}
For $N=10^4$, the top-$k$ computation using the brute-force algorithm finishes
in $\sim$300\,s.  Since the \mindist{} has a better running time, we report the
experiments for \mindist{} only.

When the distance function between the objects is defined as the \emph{average
pairwise distance} between the terms, the following schemes were evaluated:
\begin{itemize}
	\item \avgdist-\nextestimate: In this variant of \avgdist{}, the estimate
		for the best-pair is improved by progressively descending into the
		subtrees and estimating the across-tree costs at the roots of those
		subtrees.
	\item \avgdist-\complete: This is the other variant of \avgdist{} where the
		exact distance is computed at one go by computing all the pairwise term
		distances.
	\item Brute-force: In this scheme, all the $O(N^2)$ pairs are computed and
		then the top-$k$ pairs are returned.
\end{itemize}
The performance of the brute-force scheme (300\,s for $10^4$ dataset) is much
higher than that for \avgdist{} schemes.  Consequently, it is not discussed any
further.

Sections~\ref{sec:k_emd} to~\ref{sec:b_emd} report experiments on \emddist{}
while Sections~\ref{sec:k_min} to~\ref{sec:n_min} and~\ref{sec:k_avg}
to~\ref{sec:t_avg} report on \mindist{} and \avgdist{} respectively.

\subsection{Effect of k on \emddist{}}
\label{sec:k_emd}

\begin{figure}[t]
	\begin{center}
		\includegraphics[width=\figwidth]{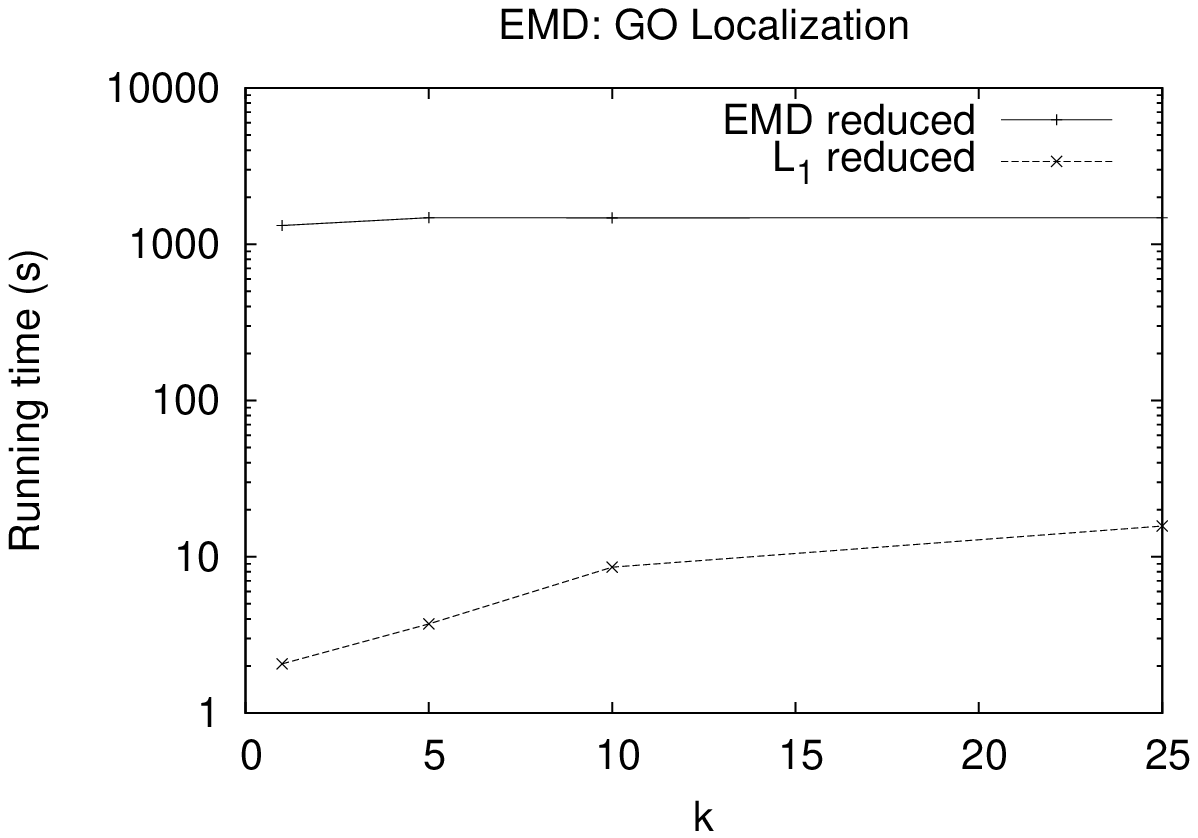}
		\vspace*{-2mm}
		\caption{Effect of $k$ on \emddist{}.}
		\label{fig:emd_go_k}
	\end{center}
	\vspace*{-6mm}
\end{figure}

Figure~\ref{fig:emd_go_k} shows the effect of $k$ on the running time of GO
localization dataset.  When $k$ is increased, more number of \lone{}
computations are needed before the TA can halt.  Consequently, more number of
\emddist{} calculations are also required.

\subsection{Effect of N on \emddist{}}
\label{sec:n_emd}

\begin{figure}[t]
	\begin{center}
		\includegraphics[width=\figwidth]{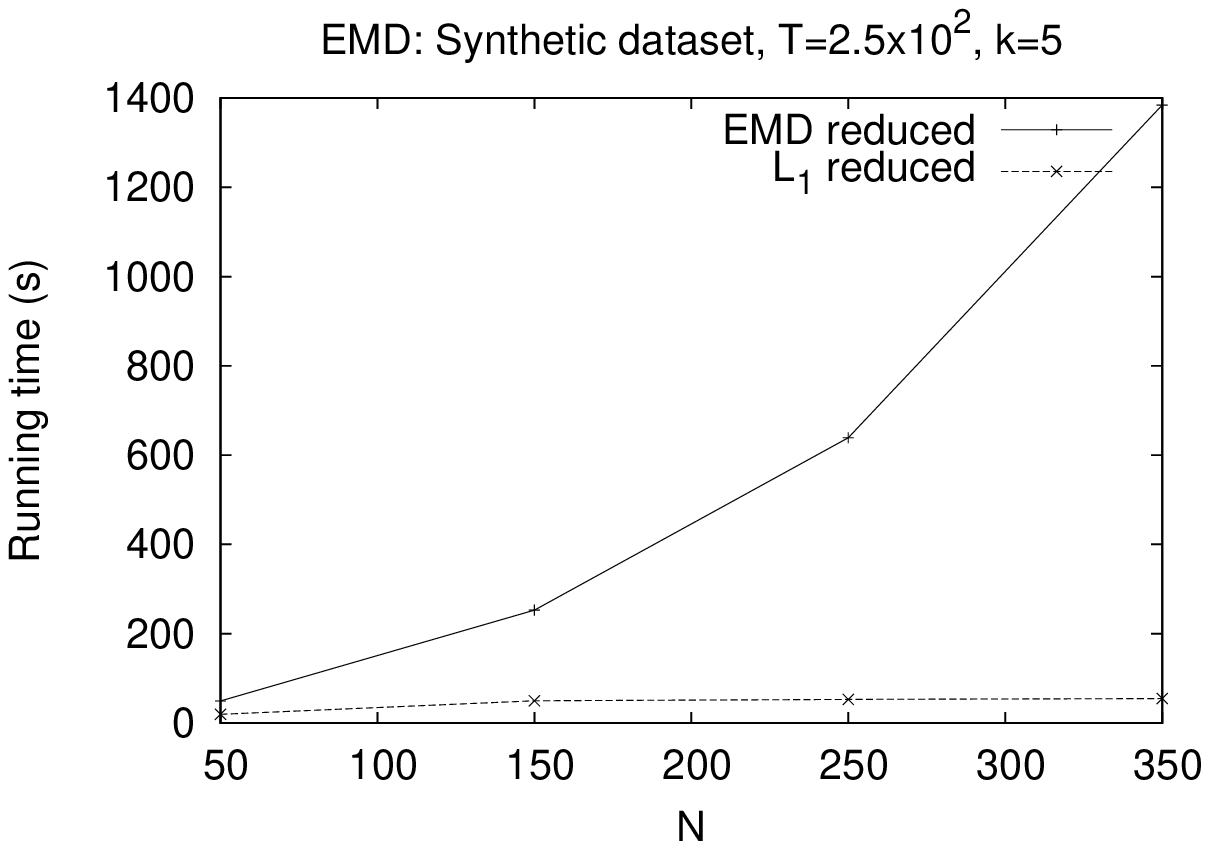}
		\vspace*{-2mm}
		\caption{Effect of $N$ on \emddist{}.}
		\label{fig:emd_n}
	\end{center}
	\vspace*{-6mm}
\end{figure}

Figure~\ref{fig:emd_n} shows that the scalability of our algorithm with $N$ is
better than quadratic.  Even though the number of objects increases
quadratically, due to \lone{} lower bounding, many of the object pairs are
pruned.  Consequently, the number of full \emddist{} computations increases by a
lower factor.  Also, even for $N=350$ which translates to $6 \times 10^4$ object
pairs, our algorithm finishes in only 55\,s.

\subsection{Number of Object Pairs for \emddist{}}
\label{sec:eta_emd}

\begin{figure}[t]
	\begin{center}
		\includegraphics[width=\figwidth]{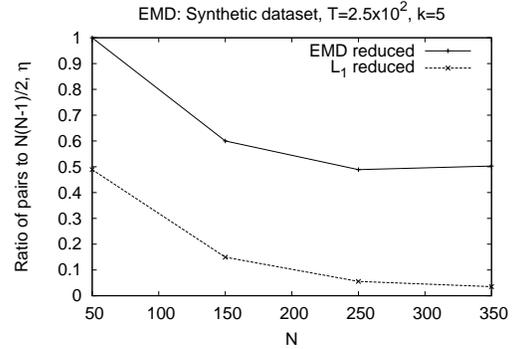}
		\vspace*{-2mm}
		\caption{Effect of $N$ on number of pairs examined for
		\emddist{}.}
		\label{fig:emd_eta}
	\end{center}
	\vspace*{-6mm}
\end{figure}

To check the effect of increasing $N$, we measured the ratio of object pairs for
which full \emddist{} computation was done.  The ratio was measured as number of
pairs investigated to the total number of possible pairs ($N(N-1)/2$) and is
denoted by $\eta$.  As Figure~\ref{fig:emd_eta} shows, $\eta$ decreases when $N$ is
increased.  For $N=250$, the number of \emddist{} computations becomes lower
than 10\,\%.

\subsection{Effect of T on \emddist{}}
\label{sec:t_emd}

The next experiment measures the effect of the total number of terms on the \emddist{}
computations.  Since both the \lone{} and EMD-reduced depends only on the
reduced number of terms, the effect of $T$ is minimal (graph not shown).

\subsection{Effect of t on \emddist{}}
\label{sec:b_emd}

\begin{figure}[t]
	\begin{center}
		\includegraphics[width=\figwidth]{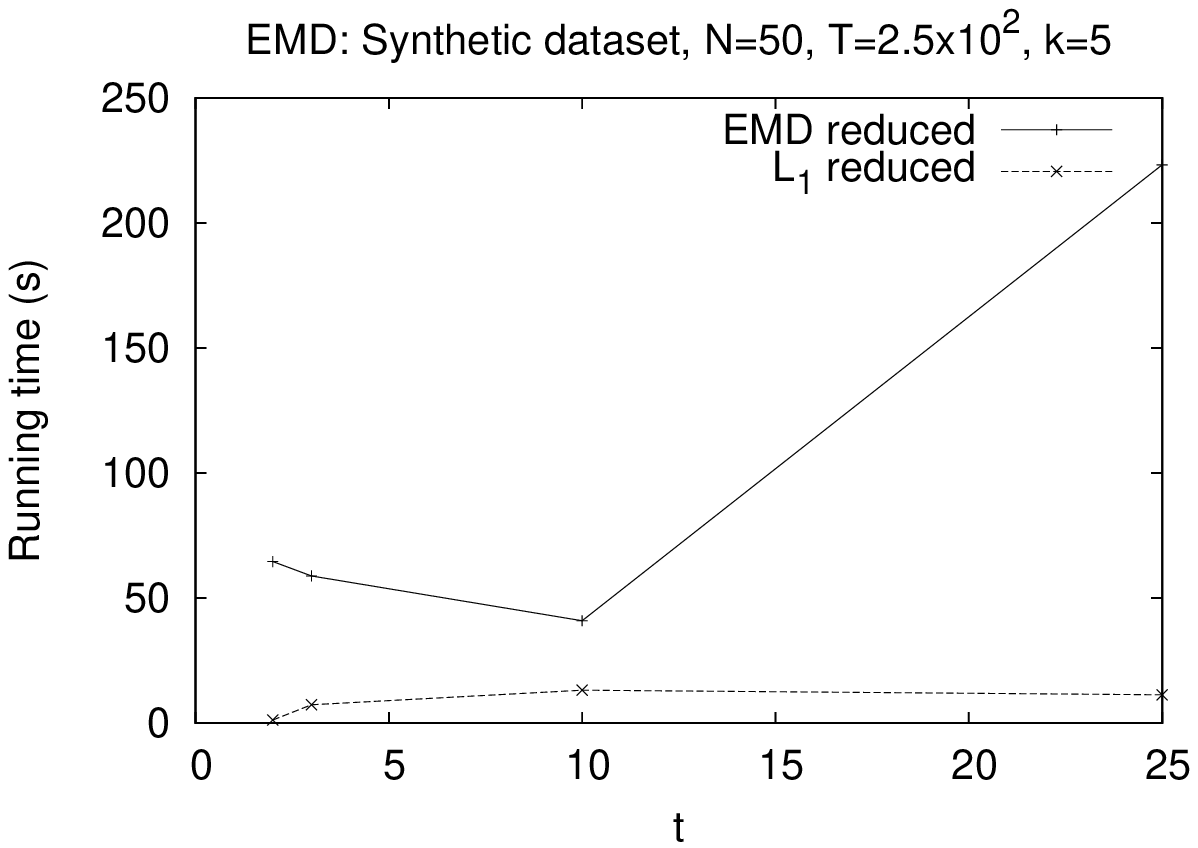}
		\vspace*{-2mm}
		\caption{Effect of $t$ on \emddist{}.}
		\label{fig:emd_b}
	\end{center}
	\vspace*{-6mm}
\end{figure}

As the number of children of root, i.e., $t$ increases, the complexity of the TA
increases linearly.  Figure~\ref{fig:emd_b} shows the running times for varying
$t$.  The size of each object description is limited to $10$.  When $t \leq 10$,
the time increases.  The EMD-reduced behaves in the opposite manner.  This is
due to the interaction of two opposing effects: as $t$ increases, each
computation takes more time, but the lower bound gets tighter as more number of
terms are taken into account resulting in less number of full \emddist{}
computations.  However, when $t > 10$, since there are at most $10$ terms in
each object, the object description size do not get reduced and each
\emddist{}-reduced computation takes as much time as the full \emddist{}
computation.  Since $O(N^2)$ of these computations are performed, the running
time shoots up.  The \lone{}-reduced, on the other hand, shows only a little
increase.

The next set of experiments measure the effect of different parameters on
the \mindist{} algorithm.

\subsection{Effect of k on \mindist}
\label{sec:k_min}

The first set of experiments measure the effect of the number of top pairs
queried ($k$), on the running time of the \mindist{} algorithm.  As shown in
Figure~\ref{fig:min_go_k}, the scalability of \mindist{} with $k$ is linear.  The
analysis done in Section~\ref{sec:minanal} shows that for small values of $k$,
this is the expected behavior.  The largest real dataset---GO process---finishes
in less than 1\,s for $k \leq 50$, demonstrating the effectiveness of the
algorithm.

\begin{figure}[t]
	\begin{center}
		\includegraphics[width=\figwidth]{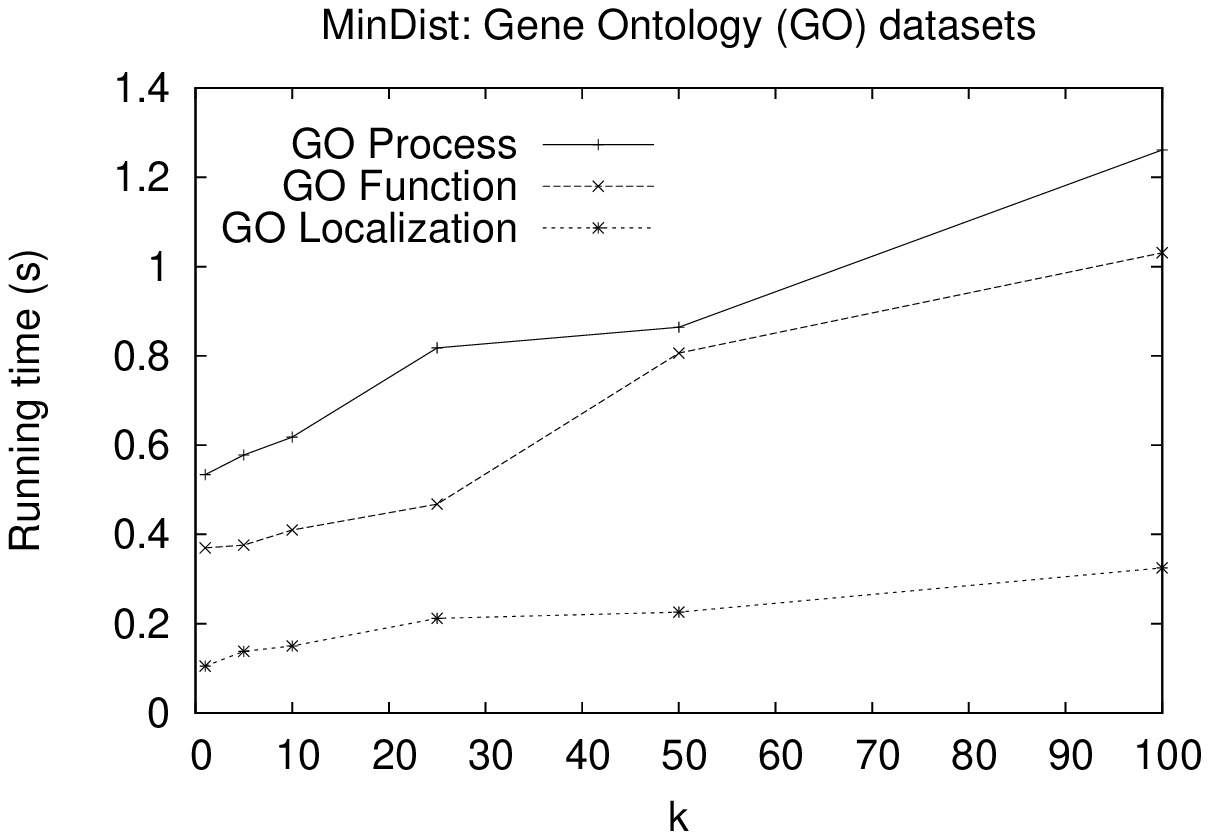}
		\vspace*{-2mm}
		\caption{Effect of $k$ on \mindist{}.}
		\label{fig:min_go_k}
	\end{center}
	\vspace*{-6mm}
\end{figure}

\subsection{Effect of T on \mindist}
\label{sec:t_min}

\begin{figure}[t]
	\begin{center}
		\includegraphics[width=\figwidth]{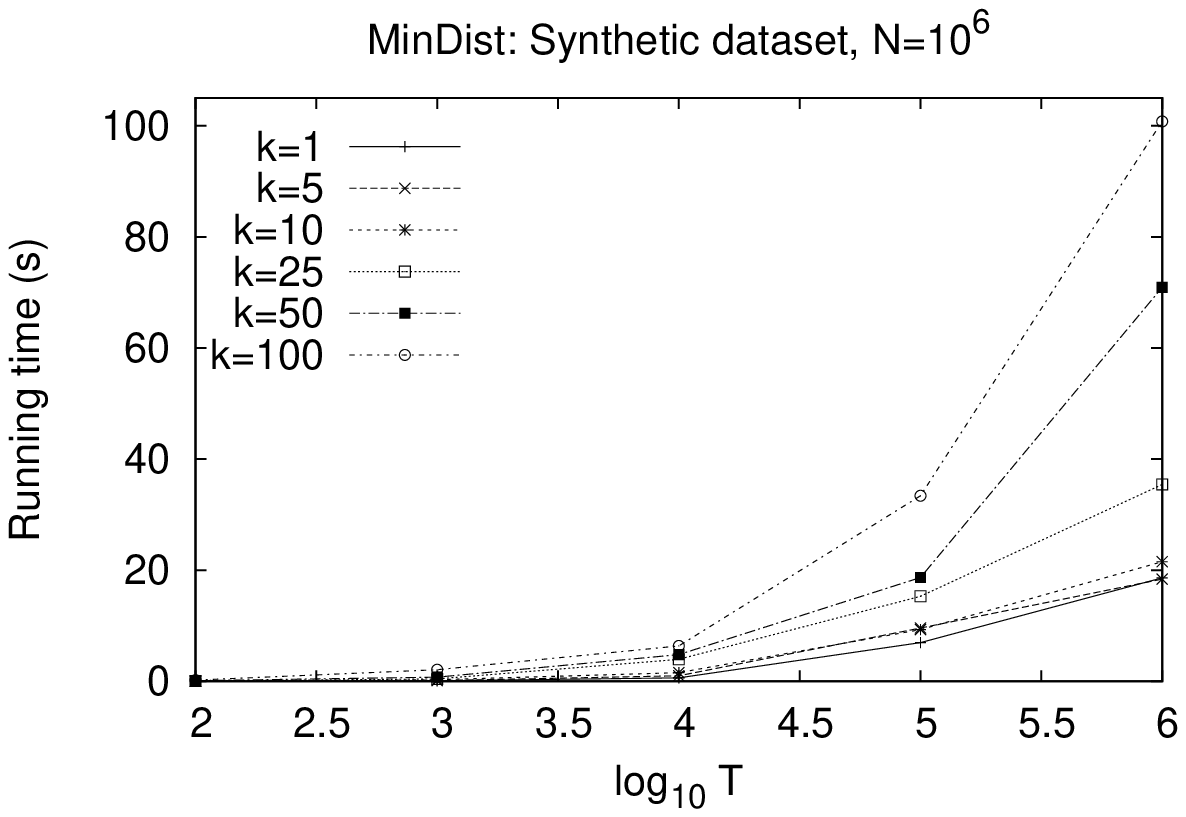}
		\vspace*{-2mm}
		\caption{Effect of $T$ on \mindist{}.}
		\label{fig:min_t}
	\end{center}
	\vspace*{-6mm}
\end{figure}

We next report the effect of the number of terms $T$ on the running time.
Figure~\ref{fig:min_t} shows that increasing $T$ increments the running time of
\mindist{} linearly, independent of the value of $k$.  We also note the
practicality of the \mindist{} algorithm.  For a very large dataset of size $N =
10^6$ and a very large tree of size $T = 10^6$, a top-$100$ query finishes in
about 100\,s.  For smaller $k$'s and for smaller $T$'s, the running time is in
seconds.

\subsection{Effect of N on \mindist}
\label{sec:n_min}

The running time analysis of the \mindist{} algorithm shows that it is
independent of the number of objects $N$.  When the number of terms $T$ is kept
constant, the experiments confirm that the running time is practically constant
even when $N$ is increased from $10^3$ to $10^6$ (graph not shown).

The next set of experiments evaluate the performance of the two variants of
\avgdist{}.

\subsection{Effect of k on \avgdist}
\label{sec:k_avg}

\begin{figure}[t]
	\begin{center}
		\includegraphics[width=\figwidth]{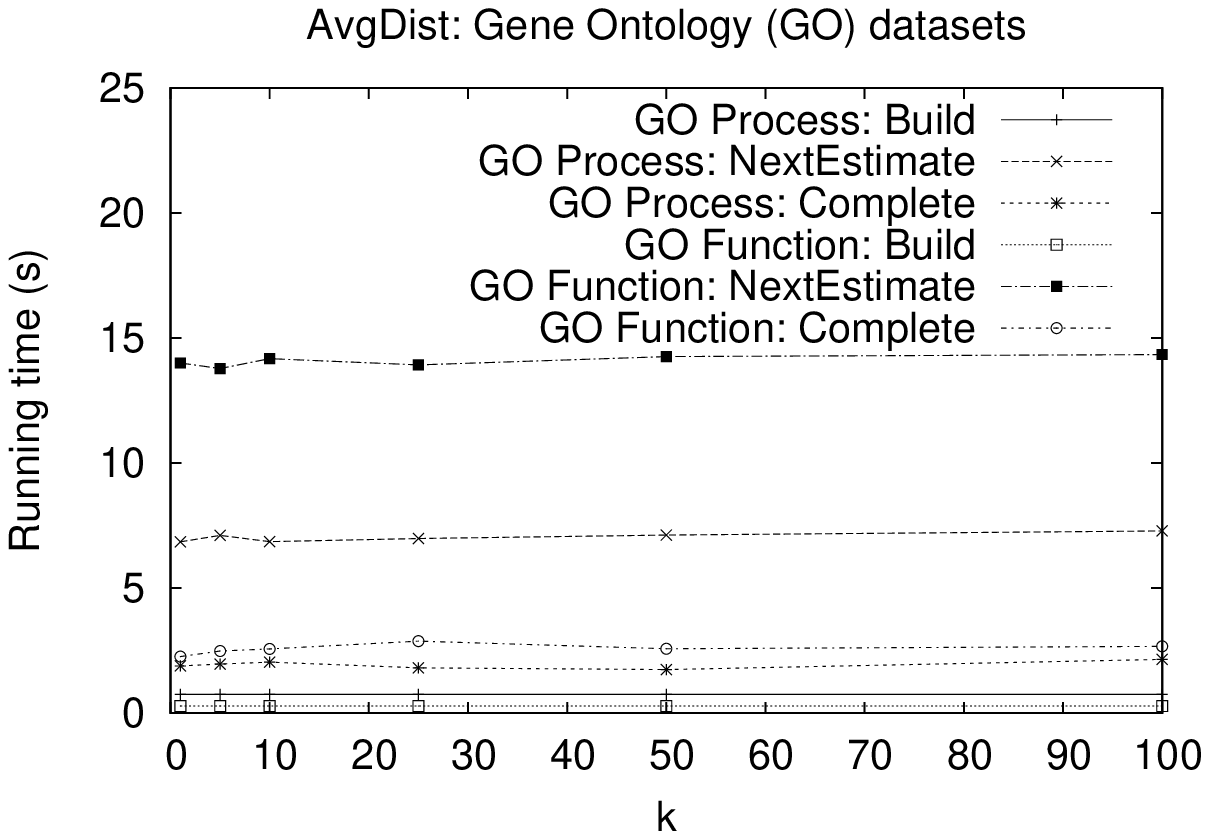}
		\vspace*{-2mm}
		\caption{Effect of $k$ on \avgdist{}.}
		\label{fig:avg_go_k}
	\end{center}
	\vspace*{-6mm}
\end{figure}

The first experiment on \avgdist{} illustrates the effect of $k$ on the running
time of the Build phase and the two different variants---\nextestimate{} and
\complete{}---for the two larger GO datasets.  All the six curves in
Figure~\ref{fig:avg_go_k} are relatively flat, showing that the effect of $k$ is
minimal.  Intuitively, the running time of \avgdist{} depends on the actual
number of object pairs investigated.  For the GO datasets, even for even large
$k$'s up to $100$, this remains almost constant.  Moreover, the Build phase
takes negligible time in comparison to the Query phase.

\subsection{Number of Object Pairs for \avgdist{}}
\label{sec:objpairs}

\begin{figure}[t]
	\begin{center}
		\includegraphics[width=\figwidth]{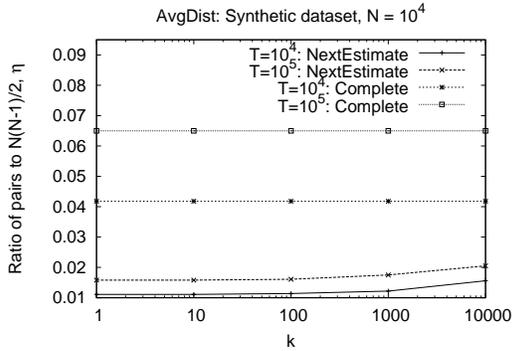}
		\vspace*{-2mm}
		\caption{Effect of $k$ on number of pairs examined for
		\avgdist{}.}
		\label{fig:avg_eta_k}
	\end{center}
	\vspace*{-6mm}
\end{figure}

We further investigated the effect of $k$ by measuring the number of object
pairs that are examined in the Query phase of the \avgdist{} algorithm.  For
this, we increased $k$ up to $10000$.  Figure~\ref{fig:avg_eta_k} shows that $\eta$
(i.e., the ratio to the total number of possible pairs) increases very slowly
with $k$.  The results are robust across different values of $T$ (as shown in
the figure) and $N$ (not shown).  This is the reason why the running time is
also constant across $k$.  

The \nextestimate{} method examines less than 2\% of the total number of pairs.
The \complete{} method investigates more object pairs (about 7\%) than the
\nextestimate{} method.  Computing a distance for the current best-pair
guarantees that only those pairs which have a bound lower than this distance
will be analyzed.  For the \nextestimate{} method, the distance of the best-pair
is computed progressively, thereby saving on full \avgdist{} computations as
compared to the \complete{} method, which finds the actual distance of the
best-pair.

\subsection{Effect of N on \avgdist}
\label{sec:n_avg}

We next discuss the experimental results when the number of objects
is varied.  We first measure the effect of number of objects on the Build phase.
From the analysis done in Section~\ref{sec:avgbuild}, we expect the running time
to grow linearly with the size of the input information.  Assuming that the
number of describing terms for an object is constant, the size of the
information is directly proportional to the number of objects.  The experiment
shows that the scalability is indeed linear (graph not shown).

\begin{figure}[t]
	\begin{center}
		\includegraphics[width=\figwidth]{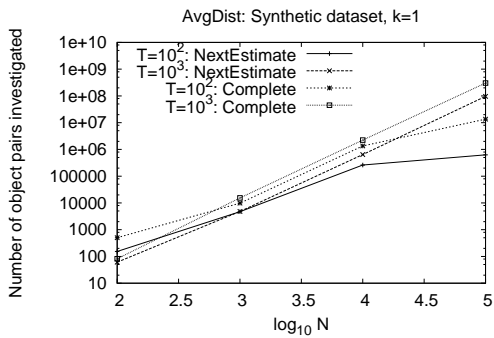}
		\vspace*{-2mm}
		\caption{Effect of $N$ on number of pairs examined for \avgdist{}.}
		\label{fig:avg_n_eta}
	\end{center}
	\vspace*{-6mm}
\end{figure}

The next experiment (Figure~\ref{fig:avg_n_eta}) shows that the number of pairs
investigated grows at most quadratically with $N$.  Since the objects are
generated using the same random process, this is expected.

\subsection{Effect of T on \avgdist}
\label{sec:t_avg}

The next set of experiments measure the effect of the number of terms $T$ on the
different components of the \avgdist{} algorithm.  Figure~\ref{fig:avg_t_build}
shows the time taken to complete the Build phase.  Note that this phase takes
the same amount of time regardless of the choice of the method for estimating
the distance of a pair.  Since the build procedure is run at each node, the
effect of $T$ is linear.  Further, as can be seen from the plot, when the number
of objects increase, more information needs to be processed at each node and the
running time increases linearly.

The next experiment 
measures the number of pairs investigated against different values of $T$.  As
shown in Section~\ref{sec:n_avg}, the number of pairs depends primarily on the
distribution of the objects on the tree---mainly the number of objects falling
in the single span lists---and not on the size of the tree.  Consequently, the
size of the tree $T$ has no appreciable effect.  Similar to the previous set of
experiments, this effect of $T$ (or rather the lack of it) is directly reflected
in the running time as well.  The running time is essentially independent of $T$
(graph not shown).

\section{Conclusions}
\label{sec:concl}

In this paper, we proposed the problem of finding top-$k$ most similar object
pairs annotated with terms from an \emph{ontology}.  The terms represent
concepts and the objects are described using these concepts.  The join problem
exposed the computational aspects of the domain well.

We then defined and motivated three object distances that can be used to define
the dissimilarity (or, equivalently similarity) between a pair of objects.  The
\emph{minimum pairwise distance} is useful in order to search objects that share
a similar term.  The \emph{average pairwise distance} captures the notion of
similarity when the object definitions are imprecise or when objects need to be
compared on multiple attributes.  The third one, \emph{earth mover's distance},
is particularly useful as it finds the best way of matching terms in one object
with those in the other by capturing the term-to-term relationships, and
measures the distance corresponding to this best matching.

Finally, we designed algorithms to efficiently solve the problem using all the
above distance measures.  The algorithm for EMD uses \lone{} distance as a lower
bound and even avoids all \lone{} computations by modifying the threshold
algorithm.  The algorithm that solves the problem for the minimum pairwise
distance runs in $O(D + T k \log k)$ time.  For the average pairwise distance,
we devised a best-first search strategy that avoids all pairs investigation by
generating lower bounds in an ordered manner.  Experimental evaluations
demonstrated the practicality and scalability of our algorithms.

In future, we would like to design algorithms for other distance measures and
lower bounds.  We would also like to develop methods that use term statistics to
improve the expected running time and further explore the optimal height of
pruning the ontology tree for EMD.  Lastly, algorithms for $k$-NN and range
queries should be simple extensions of the proposed algorithms.

\begin{figure}[t]
	\begin{center}
		\includegraphics[width=\figwidth]{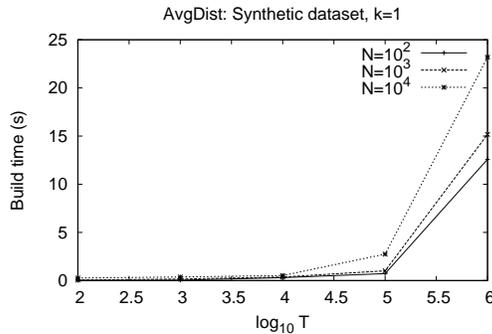}
		\vspace*{-2mm}
		\caption{Effect of $T$ on Build phase of \avgdist{}.}
		\label{fig:avg_t_build}
	\end{center}
	\vspace*{-6mm}
\end{figure}

{
\bibliographystyle{abbrv}
\bibliography{onto}
}

\pagebreak

\appendix

\section*{Appendix}

\section{Average pairwise distance follows triangular inequality}

\begin{lemma}
	\label{lemma:avg}
	The average pairwise distance \davg{} as defined in Eq.~\eqref{eq:avg}
	follows the triangular inequality property.
\end{lemma}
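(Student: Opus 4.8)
The plan is to reduce the claim to the triangle inequality for the underlying \emph{term} distance $d(\cdot,\cdot)$, which we already know holds because $d$ is the shortest-path metric on the ontology tree (Section~\ref{sec:termdist}). Write the three objects as $X$, $Y$, $Z$ with $|X|=a$, $|Y|=b$, $|Z|=c$; the goal is to establish $\davg(X,Z) \leq \davg(X,Y) + \davg(Y,Z)$.

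First I would fix an arbitrary term pair $x \in X$, $z \in Z$ together with an arbitrary intermediate term $y \in Y$. The term metric gives $d(x,z) \leq d(x,y) + d(y,z)$. Summing this over all $b$ choices of $y \in Y$ and dividing by $b$ yields
\[
d(x,z) \leq \frac{1}{b}\sum_{y\in Y} d(x,y) + \frac{1}{b}\sum_{y\in Y} d(y,z),
\]
so that a single cross-term distance is bounded by averages routed through the middle object.

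Next I would sum this inequality over all $x \in X$ and $z \in Z$. The crucial observation is that in the first sum on the right the quantity $d(x,y)$ does not depend on $z$, so summing over $z \in Z$ merely multiplies it by $c$; symmetrically, the second sum is independent of $x$ and picks up a factor $a$. This gives
\[
\sum_{x\in X,\,z\in Z} d(x,z) \leq \frac{c}{b}\sum_{x\in X,\,y\in Y} d(x,y) + \frac{a}{b}\sum_{y\in Y,\,z\in Z} d(y,z).
\]

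Finally, dividing through by $ac = |X|\cdot|Z|$ turns the left-hand side into $\davg(X,Z)$ and, after the factors of $a$, $b$, $c$ cancel, turns the right-hand side into exactly $\davg(X,Y) + \davg(Y,Z)$, completing the argument. I expect no analytic difficulty here: everything rests on applying the single term-level triangle inequality once per intermediate term and then aggregating. The only point requiring care is the bookkeeping of the multiplicative counting factors $a$, $b$, $c$, making sure they cancel correctly against the normalizing denominators of $\davg$; this is where a careless proof would slip, so I would verify that step explicitly.
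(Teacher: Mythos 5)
Your proof is correct and follows essentially the same route as the paper's: both sum the term-level triangle inequality $d(x,z) \leq d(x,y) + d(y,z)$ over all $|X|\cdot|Y|\cdot|Z|$ triples and observe that each double sum appears with the cardinality of the omitted object as a multiplicity, which cancels against the $\davg$ normalizers. Your version merely reorders the bookkeeping (averaging over the intermediate object first, then summing over the outer pair), and the counting factors $a$, $b$, $c$ do cancel exactly as you claim.
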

\begin{proof}
	Assume any three objects $A$, $B$ and $C$.  We need to prove that $\davg(A,
	B) + \davg(B, C) \geq \davg(C, A)$.
	
	Consider any term $a_i \in A$, $b_j \in B$, and $c_k \in C$.  Since the term
	distance function is a metric, we can write $d(a_i, b_j) + d(b_j, c_k) \geq
	d(c_k, a_i)$.
	Adding the $|A|.|B|.|C|$ equations together yields
	\begin{align}
		%
		%
		%
		\sum_{i,j,k=1}^{|A|,|B|,|C|} d(a_i, b_j) + \sum_{i,j,k=1}^{|A|,|B|,|C|}
		& d(b_j, c_k) \nonumber \\
		& \geq \sum_{i,j,k=1}^{|A|,|B|,|C|} d(c_k, a_i) \nonumber \\
		\textrm{or, } |C|.\sum_{i,j=1}^{|A|,|B|} d(a_i, b_j) + 
		|A|.\sum_{j,k=1}^{|B|,|C|} & d(b_j, c_k) \nonumber \\ 
		& \geq |B|.\sum_{k,i=1}^{|C|,|A|} d(c_k, a_i) \nonumber 
	\end{align}
	Dividing by $|A|.|B|.|C|$, we get
	\begin{equation}
		\davg(A, B) + \davg(B, C) \geq \davg(C, A) \nonumber
	\end{equation}
	\hfill{}
\end{proof}

\section{Hashing}

If \lone{} is computed on all the terms in the TA phase of the \emddist{}
algorithm, then the time required for sorting of $N$ objects in the initial
phase can be saved.  The key is to observe that all values for an object will be
of the form $1/c$ where $c$ is the count of the number of terms in the object.
Since $c$ is at most $T$, a hashtable of size $T$ with keys $\frac{1}{1}, \dots,
\frac{1}{T}$ can be maintained.  The $N$ object values will be hashed into it.
The heap $H$ will be filled up with values of the form $\frac{1}{i} -
\frac{1}{i+1}$ only.  This requires a running time of $O(N+T)$ instead of $O(N
\log N)$.

When reduced number of terms are used, the values will be of the form
$\frac{t_1}{t_2}$, where $1 \leq t_1 \leq T$ and $1 \leq t_2 \leq T$.  This
requires a running time of $O(N+T^2)$.


\section{Algorithm \avgdistbuild}

\begin{figure}[!ht]
	\begin{center}
		\begin{framed}
			\begin{tabbing}
				Algorithm \textbf{AvgDist-Build} \\
				\textbf{Input:} Node $t$ \\
				\textbf{Output:} Object list $B$ \\
				1. $L$ := list of objects in $t$ \\
				2. $B$ := Modify($L$) \\
				3. $c$ := number of children of $t$ \\
				4. \textbf{for} \= $i$ = $1$ to $c$ \\
				5. \> $CB[i]$ := AvgDist-Build($t.child[i]$) \\
				6. \> \textbf{for} \= \textbf{each} $co \in CB[i]$\\
				7. \> \> \textbf{if} \= $\exists o$ := Find($co.id$, $B$) \\
				8. \> \> \> $o.dist$ := \= $o.dist + co.dist$ \\
				   \> \> \> \> $+ co.count \times t.edge[i]$ \\
				9. \> \> \> $o.count$ := $o.count + co.count$ \\
				10. \> \> \textbf{end if} \\
				11. \> \textbf{end for} \\
				12. \textbf{end for}
			\end{tabbing}
		\end{framed}
		\vspace*{-4mm}
		\caption{The \emph{Build} phase of the \avgdist{} algorithm.}
		\label{fig:algoavgbuild}
	\end{center}
	\vspace*{-6mm}
\end{figure}

\end{document}